\newtheorem{theorem}{Theorem}[section]   
\newtheorem{corollary}[theorem]{Corollary} 
\newtheorem{lemma}[theorem]{Lemma} 
\newcommand{\size}[1]{\left\lvert #1 \right\rvert}
\newcommand{\Order}{{\mathrm{O}}}
\newcommand{\set}[1]{\left\{ #1 \right\}}
\newcommand{\suppress}[1]{}
\newcommand{\comment}[1]{}
\newcommand{\etal}{\emph{et al.\/} }
\newcommand{\NP}{\textbf{NP}}
\newcommand{\coNP}{\textbf{co-NP}}
\newcommand{\BQP}{\textbf{BQP}}
\newcommand{\perm}{\textsc{Permutation}}
\newcommand{\usearch}{\textsc{Unique Search}}
\newcommand{\cA}{\mathcal{A}}
\newcommand{\cB}{\mathcal{B}}
\newcommand{\eps}{\varepsilon}
\begin{document}

\title{Inverting a permutation is as hard as unordered search}

\author{
Ashwin Nayak~\thanks{
Department of Combinatorics and Optimization, and Institute for
Quantum Computing, University of Waterloo, 200 University Ave.\ W.,
Waterloo, ON N2L 3G1, Canada.
E-mail: {\tt ashwin.nayak@uwaterloo.ca}.
Research supported in part by NSERC Canada, CIFAR, an ERA (Ontario),
QuantumWorks, MITACS, and ARO (USA).
A.N.\ is also Associate Faculty, Perimeter Institute for
Theoretical Physics, Waterloo, Canada.
Research at Perimeter Institute for Theoretical Physics is supported
in part by the Government of Canada through Industry Canada and by
the Province of Ontario through MRI.
}\\
University of Waterloo, and \\
Perimeter Institute for Theoretical Physics
}

\date{February~12, 2011}

\maketitle

\begin{abstract}
We show how an algorithm for the problem of inverting a permutation
may be used to design one for the problem of
unordered search (with a unique solution).
Since there is a straightforward reduction
in the reverse direction, the problems are essentially equivalent.

The reduction we present helps us bypass the hybrid argument due to
Bennett, Bernstein, Brassard, and Vazirani~(1997) and the quantum
adversary method due to Ambainis~(2002) that were earlier used to
derive lower bounds on the quantum query complexity of the problem of
inverting permutations.  It directly implies that the quantum query
complexity of the problem is asymptotically the same as that for unordered
search, namely in~$\Theta(\sqrt{n}\,)$.
\end{abstract}

\section{Introduction}
\label{sec-introduction}
  
Let~$n$ be a positive integer.  The problem~$\perm_n$ of inverting a
permutation~$\pi$ on the set~$[n] = \set{1,2,\ldots, n}$ is defined as
follows.  Given~$\pi$ in the form of an oracle, and~$n$ as input,
output ``yes'' if the pre-image~$\pi^{-1}(1)$ is even and ``no'' if it
is odd.  This is a natural decision version of the 
problem that asks us to \emph{find\/}~$\pi^{-1}(1)$.
A related problem is that of unordered search: Given a
function~$f : [n] \rightarrow \set{0,1}$ as an oracle, and~$n$ as
input, output ``yes'' if~$f^{-1}(1)$ is non-empty and ``no''
otherwise.  In other words, determine if~$f$ maps any element~$i \in
[n]$ to~$1$. In this article, we restrict ourselves to functions~$f$
which map at most one element to~$1$. As we might expect, these
constitute the hardest instances of unordered search. We refer to the
corresponding sub-problem as~$\usearch_n$. 

The two problems were originally used by Bennett, Brassard, Bernstein,
and Vazirani~\cite{BennettBBV97} to show limitations of quantum
computers.  The search problem~$\usearch$ was used to show that
relative to a random boolean oracle~$A$, with probability~$1$, $\NP^A
\not\subseteq \BQP^A$.  The inversion problem~$\perm$ was similarly
used to show that relative to a random permutation oracle~$A$, with
probability~$1$, $\NP^A \cap \coNP^A \not\subseteq \BQP^A$. For the
first result, Bennett \etal showed that any quantum algorithm
for~$\usearch_n$ requires~$\Omega(\sqrt{n}\,)$ queries for a constant
probability of error. This involved a hybrid argument that works for
both worst-case error and distributional error under an equal mixture
of uniform distributions over ``yes'' and ``no'' instances.
The lower bound is matched by the Grover quantum
search algorithm~\cite{Grover96}, and is therefore optimal.  For the
second result, Bennett \etal used a nested hybrid argument and showed
that any quantum algorithm for the inversion problem
requires~$\Omega(\sqrt[3]{n}\,)$ queries (for constant probability of
error under the uniform distribution).  The optimal bound
of~$\Omega(\sqrt{n}\,)$ was established for worst-case query
complexity by Ambainis~\cite{Ambainis02} using the then-newly-minted
quantum adversary method.

An algorithm for unordered search (in fact, for~$\usearch_n$) may be
used to solve the inversion problem~$\perm_n$ in the obvious manner,
using at most twice the number of oracle queries. Namely, we define a
boolean function~$f$ on~$[n]$ such that~$f(i) = 1$ iff~$\pi(i) = 1$
and~$i$ is even. This function may be evaluated with one classical
query to an oracle for~$\pi$.  An additional query is used in the
quantum case to ``erase'' the answer to the first query (see the
discussion regarding quantum queries at the end of the proof of
Theorem~\ref{thm-reduction-u} in Section~\ref{sec-reduction}).  Therefore
the Grover quantum search algorithm~\cite{Grover96} solves this
problem with~$\Order(\sqrt{n}\,)$ queries to an oracle for~$\pi$.  We
describe a reduction in the reverse direction, i.e., we show how any
algorithm that solves~$\perm_n$ may be modified to
solve~$\usearch_{n/2}$ when~$n$ is even. 
The intuition behind the reduction comes from a hybrid
argument in which we consider runs of the algorithm 
on oracles not in its domain. This kind of device has been used in
numerous works on quantum query complexity to great effect.

The reduction we present is randomized and is between 
distributional versions of the
problems, with equal weight on ``yes'' and ``no'' instances, and with
uniform conditional distributions for each kind of instance.  Due to
the inherent symmetry in the two problems under consideration, the
distributional and worst-case versions of the problems are, in fact,
equivalent in query complexity (for the distributions described above;
see Lemma~\ref{lem-avg-wc} and the discussion following it).
Thus, we are also able to derive an algorithm
for~$\usearch_{n/2}$ with a bound on its worst-case error.

Let~$\mu_n$ denote the distribution obtained by taking an equal mixture
of the sole ``no'' instance and the uniform distribution over ``yes''
instances of~$\usearch_n$.
\begin{theorem}
\label{thm-reduction}
Let~$n$ be an even positive integer.  Let~$\cA$ be an algorithm
(classical or quantum) that solves~$\perm_n$ with distributional error
at most~$\eps < 1/2$ on the uniform distribution over permutations
on~$[n]$, with~$q$ queries to the permutation oracle.  Then 
there is an algorithm of the same kind as~$\cA$ (classical or quantum) 
with distributional error at most~$(1+2\eps)/4$ with respect to the 
distribution~$\mu_{n/2}$, that solves~$\usearch_{n/2}$ with at
most~$q$ queries to the search oracle in the classical case, and 
at most~$2q$ queries in the quantum case. Moreover, this may be modified
to an algorithm for~$\usearch_{n/2}$ with worst-case error at 
most~$\tfrac{1}{3-2\eps} < 1/2$ and with the same query complexity.
\end{theorem}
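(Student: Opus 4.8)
The plan is to build the $\usearch_{n/2}$ algorithm $\cB$ by embedding the search instance into a \emph{random} permutation on $[n]$ that $\cB$ generates internally, running $\cA$ on the resulting oracle, and reading off the answer. Write $m=n/2$ and pair up the positions of $[n]$ as $\set{2i-1,2i}$ for $i\in[m]$, so that the odd positions and the even positions are each in bijection with $[m]$. Given the search oracle $f:[m]\to\set{0,1}$, $\cB$ first draws a uniformly random permutation $\pi$ of $[n]$ (which it knows, and can therefore evaluate for free) together with a uniformly random relabelling $\theta$ of $[m]$. It then simulates for $\cA$ an oracle $g$ defined by a purely local rule: to answer a query at position $k$, let $i=\lceil k/2\rceil$, query $f(\theta(i))$, and edit the value $\pi(k)$ so that whenever $f(\theta(i))=1$ the preimage of $1$ is driven to the even slot $2i$. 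The relabelling $\theta$ serves to symmetrise over the yes-instances, while the locality of the rule ensures that each query of $\cA$ costs a single query to $f$; classically this gives $q$ queries, and quantumly the query to $f$ must be computed and then uncomputed to keep the simulated oracle unitary, giving the factor of $2$ and hence $2q$ queries. The essential feature---and the source of the intuition from the hybrid argument---is that forcing the preimage of $1$ onto the solution's even slot by a local edit cannot be done while keeping $g$ a genuine permutation, so on yes-instances $\cB$ hands $\cA$ an oracle lying just \emph{outside} the domain of permutations.

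To bound the distributional error under $\mu_{m}$ I would split into the two branches. On the sole no-instance $f\equiv 0$ no edit ever fires, so $g=\pi$ is exactly a uniformly random permutation; here the planted structure carries no information about the (absent) solution, and the construction is arranged so that $\cB$'s output is an unbiased coin, contributing error $e_0=\tfrac12$. On a yes-instance the edit forces the preimage of $1$ to be even, and the out-of-domain oracle presented to $\cA$ is, after averaging over $\pi$ and $\theta$, statistically the image of a uniformly random permutation whose preimage-parity encodes the correct answer; consequently $\cB$ inherits exactly the error that $\cA$ makes on the uniform distribution, namely $e_1=\eps$. Averaging the two branches with the weights $\tfrac12,\tfrac12$ of $\mu_{m}$ gives distributional error $\tfrac12\cdot\tfrac12+\tfrac12\cdot\eps=(1+2\eps)/4$, as claimed. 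I expect this averaging step to be the main obstacle. Classically $\cA$'s acceptance probability is a function of the oracle's table, and one can average over $\pi$ directly; quantumly one must argue that feeding $\cA$ the out-of-domain oracle yields the same acceptance probability as feeding it the matched uniform distribution of permutations, \emph{without} re-deriving the hybrid bound. This is precisely the step the reduction is designed to finesse, and the compute/uncompute doubling is what makes the simulated unitary oracle well defined there.

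Finally I would convert the distributional guarantee into a worst-case one. The relabelling $\theta$ already makes the error identical on every yes-instance, so after symmetrisation the algorithm has a single no-error $e_0=\tfrac12$ and a single yes-error $e_1=\eps$. To balance these I would mix $\cB$ with the trivial algorithm that ignores its input and answers ``no'': let $\cB_\lambda$ run $\cB$ with probability $\lambda$ and otherwise output ``no''. This uses no additional oracle queries, so the query complexity is unchanged. Its no-error is $\lambda\cdot\tfrac12$ and its yes-error is $(1-\lambda)+\lambda\eps$, and choosing $\lambda=\tfrac{2}{3-2\eps}\le 1$ equalises the two at $\tfrac{1}{3-2\eps}$, which is strictly below $\tfrac12$ for $\eps<\tfrac12$. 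Thus the worst-case error is at most $\tfrac{1}{3-2\eps}$ with the same number of queries, completing the argument. The symmetry exploited here is exactly the equivalence of the distributional and worst-case versions of these problems recorded in Lemma~\ref{lem-avg-wc}.
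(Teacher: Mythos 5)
There is a genuine gap, and you have in fact pointed at it yourself: the claim that on a ``yes'' instance your algorithm ``inherits exactly the error that $\cA$ makes on the uniform distribution, namely $e_1=\eps$'' is unsupported, and with your one-branch construction it cannot be supported. On a ``yes'' instance the oracle $g$ you hand to $\cA$ has two preimages of $1$ (the original $\pi^{-1}(1)$ and the planted even slot), so it is not a permutation, and the hypothesis on $\cA$ says nothing about its behaviour there: an adversarial $\cA$ could answer ``no'' on every collision function while still solving $\perm_n$ with error $\eps$ on permutations, in which case your $\cB$ errs with probability~$1$ on every ``yes'' instance. Arguing that $\cA$'s acceptance probability on these out-of-domain oracles matches its acceptance probability on $P_1$ is not a finesse of the hybrid argument --- it is false in general. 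There is also a smaller problem on the ``no'' instance: if you output $\cA(n,\pi)$ for a uniformly random permutation $\pi$, the output is ``yes'' with probability $\tfrac12(1-\eps_1+\eps_0)$, which is an unbiased coin only if $\cA$'s errors on $P_0$ and $P_1$ happen to coincide; in general your $e_0$ can be as large as $\tfrac12+\eps$.

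The paper's reduction is arranged precisely so that nothing need be assumed about $\cA$ on collision functions, and it extracts its $\eps$-quality information from the ``no'' instance rather than from the ``yes'' instances. It runs two branches with probability $\tfrac12$ each: pick $\pi$ uniformly from $P_0$ and output $\cA(n,h_{\pi,f})$, or pick $\pi$ uniformly from $P_1$ and output $\neg\cA(n,h_{\pi,f})$, where the planted slot is even in the first branch and odd in the second. On the ``no'' instance $h_{\pi,f}=\pi$, so both branches pose a genuine $\perm_n$ question and the error is $(\eps_0+\eps_1)/2\le\eps$. On a ``yes'' instance both branches feed $\cA$ a uniformly random element of $Q$; if $p$ is the (unknown, uncontrollable) probability that $\cA$ says ``no'' there, the two branches err with probabilities $p$ and $1-p$, which average to exactly $\tfrac12$ regardless of $p$. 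So the roles of your $e_0=\tfrac12$ and $e_1=\eps$ are exchanged relative to the correct argument. Your final balancing step (mixing with a constant-output algorithm to reach $1/(3-2\eps)$) and the compute--uncompute accounting for the quantum query cost do agree with Lemma~\ref{lem-avg-wc} and the paper's query analysis, but they rest on error bounds $e_0,e_1$ that the construction as described does not deliver.
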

We emphasize that the reduction uses only the knowledge of a
bound~$\eps$ on the distributional error of~$\cA$, and not its precise value.
An algorithm for~$\perm_n$ for~$n > 2$ with distributional
error at most~$\eps$ on the uniform distribution may be used to 
solve~$\perm_{n-1}$ with the same query complexity and distributional
error at most~$\eps + 1/2n$ on the uniform distribution (see the
discussion after Lemma~\ref{lem-avg-wc}). So a reduction similar to the
one in the theorem above exists for odd~$n$ as well.

Any quantum algorithm for $\usearch_{n/2}$ with constant probability 
of error~$ < 1/2$ with respect to~$\mu_{n/2}$ requires~$\Omega(\sqrt{n}\,)$
queries~\cite{BennettBBV97}.  
Theorem~\ref{thm-reduction} therefore implies a similar lower bound
for quantum algorithms for~$\perm_n$, and an~$\Omega(n)$ query lower
bound for classical algorithms for the problem.
(The lower bound of~$\Omega(n)$ for
$\perm_n$ in the classical case is straightforward to prove directly.)
\begin{corollary}
\label{cor-lb}
Any quantum algorithm that solves~$\perm_n$ (for any integer~$n > 1$)
with constant distributional error~$\eps < 1/2$ on the uniform
distribution over permutations on~$[n]$ requires~$\Omega(\sqrt{n}\,)$
queries. Consequently, the same query lower bound holds for algorithms
for $\perm_n$ with worst-case error at most~$\eps$.
\end{corollary}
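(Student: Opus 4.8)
The plan is to prove the bound by contraposition, chaining Theorem~\ref{thm-reduction} with the known $\Omega(\sqrt{n}\,)$ lower bound for unordered search: a quantum algorithm for $\perm_n$ that uses too few queries would yield, through the reduction, a quantum algorithm for $\usearch_{n/2}$ that is too cheap to exist. I would first dispose of the case of even~$n$, then handle odd~$n$ by a preliminary size reduction, and finally read off the worst-case statement from the distributional one.

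For even~$n$, suppose $\cA$ is a quantum algorithm solving $\perm_n$ with distributional error $\eps < 1/2$ on the uniform distribution over permutations, using $q$ queries. Theorem~\ref{thm-reduction} then furnishes a quantum algorithm for $\usearch_{n/2}$ with distributional error at most $(1+2\eps)/4$ with respect to $\mu_{n/2}$, using at most $2q$ queries. The first point needing a quick check is that this error is a constant strictly below $1/2$: since $\eps < 1/2$ we have $2\eps < 1$, whence $(1+2\eps)/4 < 1/2$. The lower bound of Bennett \etal\cite{BennettBBV97} for $\usearch_{n/2}$ under $\mu_{n/2}$ now applies and gives $2q \in \Omega(\sqrt{n/2}\,) = \Omega(\sqrt{n}\,)$, so $q \in \Omega(\sqrt{n}\,)$.

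For odd~$n > 1$ (so that $n \ge 3 > 2$), I would first invoke the reduction noted after Lemma~\ref{lem-avg-wc}: an algorithm for $\perm_n$ with distributional error $\eps$ and $q$ queries yields one for $\perm_{n-1}$ with the same query complexity and distributional error at most $\eps + 1/(2n)$. As $n$ grows this error tends to $\eps < 1/2$ and so stays below $1/2$ for all large enough $n$; applying the even-$n$ argument to $\perm_{n-1}$ then gives $q \in \Omega(\sqrt{n-1}\,) = \Omega(\sqrt{n}\,)$. Because the claim is asymptotic, the finitely many small values of~$n$ are absorbed into the implicit constant. The worst-case statement is now immediate: an algorithm with worst-case error at most $\eps$ has distributional error at most $\eps$ on every distribution, the uniform one included, so the distributional bound just established applies to it unchanged.

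I do not anticipate a genuine obstacle here, since the corollary is essentially a composition of the reduction with an existing lower bound rather than a new argument. The only care required is bookkeeping: tracking that the transferred error $(1+2\eps)/4$ (and, in the odd case, $(1+2(\eps+1/(2n)))/4$) remains a constant strictly below $1/2$, and confirming that the factor-of-two loss in query count and the passage from $\sqrt{n/2}$ to $\sqrt{n}$ are immaterial for an asymptotic bound.
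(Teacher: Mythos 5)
Your proposal is correct and follows the paper's own route exactly: the corollary is obtained by composing Theorem~\ref{thm-reduction} with the $\Omega(\sqrt{n}\,)$ lower bound of Bennett \etal for $\usearch_{n/2}$ under $\mu_{n/2}$, handling odd~$n$ via the $\perm_n \to \perm_{n-1}$ reduction with its $1/2n$ error increase, and deducing the worst-case claim from the fact that worst-case error upper-bounds distributional error. Your bookkeeping of the error constants and the asymptotic absorption of constant factors matches what the paper leaves implicit.
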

Naturally, Theorem~\ref{thm-reduction} and Corollary~\ref{cor-lb} both 
also hold for any quantum algorithm for the \emph{search\/}
version of~$\perm_n$ with the same kind of error bound as in the
statements above.

The reduction we present bypasses the hybrid argument due to Bennett,
Bernstein, Brassard, and Vazirani~\cite{BennettBBV97} and the quantum
adversary method due to Ambainis~\cite{Ambainis02}, and shows a direct
connection between inversion and search. The hybrid argument underlying
the reduction was discovered in~2004 and communicated informally to 
a few people. The reduction is written up here for wider dissemination.

\section{The reduction}
\label{sec-reduction}

We start by fixing some notation and making preliminary observations. 
Then we sketch a hybrid argument which
paves the way for the reduction (Theorem~\ref{thm-reduction-u}). 
Lemma~\ref{lem-avg-wc} derives a worst-case algorithm for $\usearch_n$ 
from an
average-case algorithm, and together with Theorem~\ref{thm-reduction-u} 
implies Theorem~\ref{thm-reduction}. We finish by sketching how 
Theorem~\ref{thm-reduction} extends to odd~$n$.

Let~$n$ be any positive integer. Define the following sets of
``no'' and ``yes'' instances of $\perm_n$:
\begin{eqnarray*}
P_0 & = & \set{ \pi \;:\; \pi \textrm{ is a permutation on } [n],
          \quad \pi^{-1}(1) \textrm{ is odd }}, \quad
          \textrm{and} \\
P_1 & = & \set{ \pi \;:\; \pi \textrm{ is a permutation on } [n],
          \quad \pi^{-1}(1) \textrm{ is even }}.
\end{eqnarray*}
We also consider oracles
that compute functions~$h : [n] \rightarrow [n]$ with a unique
collision at~$1$, with one odd and one even number in the colliding
pair. In other words, these functions~$h$ are such that there are
precisely two distinct elements~$i,j$ with the same image
under~$h$. Moreover, $h(i) = h(j) = 1$, and precisely one of~$i,j$ is
odd (and the other is even).  Let~$Q$ denote the set of all such
functions.

Consider any fixed permutation~$\pi$ on~$[n]$. Consider also the
functions in~$Q$ that differ from~$\pi$ in exactly one point. These
are functions~$h$ with a unique collision such that the collision is
at~$1$, and~$\size{\pi^{-1}(1) \cap h^{-1}(1)} = 1$.  If~$\pi \in
P_0$, then the even element that is also mapped to~$1$ by~$h$ is
precisely the one on which~$\pi$ and~$h$ differ. Similarly, if~$\pi
\in P_1$, then the odd element that is also mapped to~$1$ by~$h$ is
precisely the one on which~$\pi$ and~$h$ differ.  Let~$Q_\pi$ denote
the set of such functions~$h$.
If we pick a uniformly random permutation~$\pi \in P_0$, and then pick a
uniformly random function~$h$ in~$Q_\pi$, then~$h$ is uniformly random
in~$Q$. The same holds if we switch~$P_0$ with~$P_1$.

Given a permutation~$\pi$ on~$[n]$ with~$n$ even, and a function~$f : [n/2]
\rightarrow \set{0,1}$, we define a function~$h_{\pi,f} : [n]
\rightarrow [n]$ as follows. If~$\pi \in P_0$, for any~$i \in [n]$,
\begin{eqnarray}
\label{eqn-h0}
h_{\pi,f}(i) & = & 
\begin{cases}
  1     & \textrm{if } i \textrm{ is even and } f(i/2) = 1 \\
\pi(i)  & \textrm{if } (i \textrm{ is odd) or } 
          (i \textrm{ is even and } f(i/2) = 0).
\end{cases}
\end{eqnarray}
If~$\pi \in P_1$, for any~$i \in [n]$,
\begin{eqnarray}
\label{eqn-h1}
h_{\pi,f}(i) & = & 
\begin{cases}
  1     & \textrm{if } (i \textrm{ is odd) and } f((i+1)/2) = 1 \\
\pi(i)  & \textrm{if } (i \textrm{ is even) or } 
          (i \textrm{ is odd and } f((i+1)/2) = 0).
\end{cases}
\end{eqnarray}
The function~$h_{\pi,f}$ coincides with~$\pi$ if~$f^{-1}(1)$ is empty
and belongs to~$Q_\pi$ otherwise.

We make use of the above properties of~$P_0, P_1, Q, Q_\pi$
and~$h_{\pi,f}$ in our reduction.

The essential idea behind our reduction
is that any algorithm that distinguishes a uniformly random 
permutation in~$P_0$ from a uniformly random one in~$P_1$ necessarily 
distinguishes a random permutation from~$P_i$ from a uniformly random 
unique-collision function in~$Q$, for at least one~$i \in \set{0,1}$.
Suppose that~$i=0$. Using convexity, we may further deduce that such
an algorithm also distinguishes the permutation~$\pi \in P_0$
from a uniformly random unique-collision function~$h \in Q_\pi$, for at
least one~$\pi$.  Since any function~$h \in Q_\pi$ differs from~$\pi$
in exactly one of~$n/2$ points when~$n$ is even, this
final problem is equivalent to the problem of unique unordered search
over a domain of size~$n/2$.
The above hybrid argument suffices to prove a query lower bound
for~$\perm_n$, but is not entirely satisfactory because it corresponds
to a non-uniform reduction.
(\emph{A priori\/}, we do not know which~$i \in \set{0,1}$ and
which~$\pi \in P_i$ would give us a correct algorithm for~$\usearch_{n/2}$.)
We may however glean a reduction in the uniform
sense from this argument. The idea is to replace the choices made on
the basis of existential arguments by randomized ones. 
We therefore try distinguishing a uniformly random~$\pi \in P_0$ from a
uniformly random~$h_{\pi,f} \in Q_\pi$, and similarly with a uniformly
random permutation from~$P_1$, with equal probability. At least one of
these probabilistic attempts succeeds, and gives us a bounded-error algorithm.

Recall that~$\mu_{n}$ is the distribution which assigns 
probability~$1/2$ to the
constant function~$0$, and probability~$1/2n$ to each of the ``yes''
instances of~$\usearch_{n}$. Let~$\mu^1_{n}$ be the uniform distribution
on ``yes'' instances alone, and let~$\mu^0_{n}$ be the analogue for the
lone ``no'' instance.

\begin{theorem}
\label{thm-reduction-u}
Let~$n$ be an even positive integer.  Let~$\cA$ be an algorithm
(classical or quantum) that solves~$\perm_n$ with distributional error
at most~$\eps < 1/2$ on the uniform distribution over permutations
on~$[n]$, with~$q$ queries to the permutation oracle.  Then there is an
algorithm of the same kind as~$\cA$ with distributional error
at most~$\tfrac{1+2\eps}{4} < 1/2$ with respect to~$\mu_{n/2}$ that
solves~$\usearch_{n/2}$ with at most~$q$ queries to the search oracle
in the classical case, and at most $2q$ queries in the quantum
case. Moreover, its error probability on an input drawn from~$\mu^0_{n/2}$
is at most~$\eps$ and that for~$\mu^1_{n/2}$ is~$1/2$.
\end{theorem}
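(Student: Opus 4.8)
The plan is to build a $\usearch_{n/2}$ algorithm~$\cB$ that runs~$\cA$ on the (generally non-permutation) oracle~$h_{\pi,f}$ for a randomly chosen reference permutation~$\pi$, and to symmetrize over the two cases~$\pi\in P_0$ and~$\pi\in P_1$ so that the uncontrolled behaviour of~$\cA$ on collision functions in~$Q$ cancels. Concretely, given a search oracle for~$f:[n/2]\to\set{0,1}$, algorithm~$\cB$ flips a fair coin~$b\in\set{0,1}$, draws a uniformly random~$\pi\in P_b$, and runs~$\cA$ with its permutation oracle replaced by an oracle for~$h_{\pi,f}$. If~$b=0$ it outputs ``yes'' exactly when~$\cA$ answers ``yes''; if~$b=1$ it outputs ``yes'' exactly when~$\cA$ answers ``no''. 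Note that when~$f$ has a solution,~$h_{\pi,f}$ is not a permutation, so~$\cB$ is running~$\cA$ outside its promised domain; this is legitimate since~$\cA$ is a well-defined sequence of queries and operations on any oracle, and is the hybrid-argument device at the heart of the reduction.

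First I would pin down the query cost and the simulation of~$h_{\pi,f}$. Because~$\cB$ chose~$\pi$ itself, each value~$h_{\pi,f}(i)$ needs at most one lookup of~$f$ (only for the parity of~$i$ that~(\ref{eqn-h0})--(\ref{eqn-h1}) can overwrite by~$1$; for the other parity~$h_{\pi,f}(i)=\pi(i)$ is known for free). So each of the~$q$ calls of~$\cA$ costs at most one classical query to~$f$, giving~$q$ queries in total. In the quantum case one computes~$f$ into an ancilla, applies the controlled write of~$h_{\pi,f}(i)$, and uncomputes~$f$, for~$2q$ queries; this is the ``erase the answer'' step, and it must be done carefully so that no residual entanglement with the ancilla remains and the simulated oracle is a faithful reversible query to~$h_{\pi,f}$.

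Next I would analyze the two input types using the distributional facts already established. On the ``no'' instance~$f\equiv 0$ we have~$h_{\pi,f}=\pi$, uniform in~$P_b$, and the output rule is arranged so that~$\cB$ errs exactly when~$\cA$ errs on~$\pi$ as a~$\perm_n$ instance; averaging over the fair coin~$b$ yields the distributional error of~$\cA$ on the uniform permutation distribution, hence at most~$\eps$, which is the claimed bound for~$\mu^0_{n/2}$. On a ``yes'' instance from~$\mu^1_{n/2}$, one checks that for fixed~$\pi\in P_b$ the map~$f\mapsto h_{\pi,f}$ is a bijection from the unique-solution functions onto~$Q_\pi$; so a uniform~$f$ together with a uniform~$\pi\in P_b$ produces a uniformly random~$h\in Q$, and crucially the \emph{same} distribution on~$Q$ arises for both~$b=0$ and~$b=1$. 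Writing~$z$ for the probability that~$\cA$ answers ``yes'' on this common distribution, the~$b=0$ branch errs with probability~$1-z$ and the~$b=1$ branch with probability~$z$, so the fair coin forces the error to be exactly~$\tfrac12$, independent of~$z$.

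Finally I would combine the cases. Since~$\mu_{n/2}$ places weight~$\tfrac12$ on the ``no'' instance and~$\tfrac12$ on~$\mu^1_{n/2}$, the overall distributional error is at most~$\tfrac12\eps+\tfrac12\cdot\tfrac12=\tfrac{1+2\eps}{4}$, as required. The main obstacle is conceptual rather than computational: one must recognize that randomizing over~$b$ makes~$\cA$'s behaviour on the out-of-domain functions in~$Q$ cancel to exactly~$\tfrac12$, which is precisely what lets the reduction rely only on the \emph{bound}~$\eps$ and not on any information about how~$\cA$ acts on non-permutation oracles.
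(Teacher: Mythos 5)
Your proposal is correct and follows essentially the same route as the paper: the same randomized reduction (fair coin over $P_0$ versus $P_1$, running~$\cA$ on~$h_{\pi,f}$ and negating the answer in the~$P_1$ branch), the same case analysis giving error at most~$\eps$ on the ``no'' instance and exactly~$\tfrac12$ on~$\mu^1_{n/2}$, and the same query accounting with the two-query clean simulation in the quantum case. No gaps.
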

\begin{proof}
Suppose we are given an algorithm~$\cA$ as in the statement of the
theorem that takes as input an even integer~$n \geq 1$, and an
oracle~$g : [n] \rightarrow [n]$. Let~$f$ be an input oracle
for~$\usearch_{n/2}$.
Recall the definition of the function~$h_{\pi,f} : [n]
\rightarrow [n]$, where~$\pi$ is a permutation on~$[n]$, as in
Eqs.~(\ref{eqn-h0},\ref{eqn-h1}). The following reduction~$\cB$
solves~$\usearch_{n/2}$ with distributional error as claimed:
\begin{quote}
With probability~$\tfrac{1}{2}$,
pick a uniformly random permutation~$\pi \in P_0$, 
output~$\cA(n,h_{\pi,f})$ and stop.

With probability~$\tfrac{1}{2}$,
pick a uniformly random permutation~$\pi \in P_1$,
output~$\neg \cA(n,h_{\pi,f})$ and stop.
\end{quote}

We may calculate the probability of error of the algorithm~$\cB$ by
considering ``yes'' and ``no'' instances separately.  The output of~$\cB$
on the lone ``no'' instance of~$\usearch_{n/2}$ is
\[
\begin{cases}
\cA(n,\pi) & \textrm{with probability } \frac{1}{2},
             \textrm{ for a uniformly random } \pi \in P_0,
             \textrm{ and} \\
\neg \cA(n,\pi) & \textrm{with probability } \frac{1}{2}, 
                  \textrm{ for a uniformly random } \pi \in P_1 .
\end{cases}
\]
Let~$\eps_0$ be the
probability of error of~$\cA$ on a uniformly random oracle from~$P_0$,
and let~$\eps_1$ be the corresponding quantity for~$P_1$. We
have~$\eps_0 + \eps_1 \leq 2\eps$.  The probability of error of~$\cB$ on the
``no'' instance is thus bounded by~$(\eps_0 + \eps_1)/2 \leq \eps$.

The output of~$\cB$ on a uniformly random ``yes'' instance
of~$\usearch_{n/2}$ is
\[
\begin{cases}
\cA(n,h) & \textrm{with probability } \frac{1}{2},
           \textrm{ for a uniformly random } h \in Q,
           \textrm{ and} \\
\neg \cA(n,h) & \textrm{with probability } \frac{1}{2},
                \textrm{ for a uniformly random } h \in Q.
\end{cases}
\]
If~$p$ denotes the
probability that the output~$\cA(n,h)$ is ``no'' for a uniformly
random~$h \in Q$, the probability of error of~$\cB$ on a uniformly random
``yes''  instance of~$\usearch_{n/2}$ is~$p/2 + (1-p)/2 = 1/2$.

Since~$\pi$ is known explicitly in the algorithm~$\cB$,
the function~$h_{\pi,f}$ can be evaluated with at most one query
to~$f$ in the classical case.
In the quantum case, the situation is not as straightforward.
Data in quantum memory generated during computations
may prevent the quantum interference necessary for correct working of an
algorithm. This may be the case with the answers to oracle queries
to~$f$ that are used to compute~$h_{\pi,f}$. It is therefore important
to simulate queries to~$h_{\pi,f}$ cleanly, i.e., without modifying 
any part of the workspace except the answer register.
With a standard implementation of the oracle to~$f$ as a unitary
operator, a query to~$h_{\pi,f}$ can be simulated cleanly
by using \emph{two\/} queries (see~\cite[Section~4]{BennettBBV97}
or~\cite[Section~1.5]{KayeLM07}).
We note that by working with a different implementation of the
oracle for~$f$, it is possible to simulate a query to~$h_{\pi,f}$ 
cleanly with only one query. As this leads to only a minor improvement 
in our reduction, we omit the details.
\end{proof}

Any algorithm with worst-case error~$\eps$ implies an algorithm with
distributional error~$\eps$ with respect to any distribution.  We
point out that $\usearch$ and $\perm$ both admit random
self-reductions, so the average-case algorithm described in
Theorem~\ref{thm-reduction-u} implies the worst-case algorithm claimed
in Theorem~\ref{thm-reduction}.
\begin{lemma}
\label{lem-avg-wc}
Suppose~$\cB$ is an algorithm for~$\usearch_n$ with distributional
error at most~$\eps_1$ on distribution~$\mu^1_{n}$, and at
most~$\eps_0$ on~$\mu^0_n$ such that~$\eps_0+\eps_1 < 1$.
Then, there is an algorithm for $\usearch_n$
that makes the same number of queries as~$\cB$, and has
worst-case error at most
\[
\eps \quad = \quad \frac{\max\set{\eps_0,\eps_1}}{1 + \size{\eps_0 - \eps_1}} \quad < \quad
\frac{1}{2} \enspace.
\]
\end{lemma}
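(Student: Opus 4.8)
The plan is to upgrade the average-case guarantee to a worst-case one in two stages: first a random self-reduction that makes the two per-class error bounds hold for \emph{every} instance (not just on average), and then an output-biasing step that equalizes the errors on ``yes'' and ``no'' instances to produce the stated bound.

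First I would exploit the symmetry of~$\usearch_n$ under relabeling of the domain. For a uniformly random permutation~$\sigma$ on~$[n]$, consider the relabeled oracle~$f_\sigma$ defined by~$f_\sigma(i) = f(\sigma(i))$. Since~$\sigma$ is chosen by the reduction itself, one query to~$f_\sigma$ costs exactly one query to~$f$: in the quantum case one conjugates the query to~$f$ by the known, reversible permutation~$\sigma$ acting on the index register, so the query count is preserved. The crucial point is the \emph{asymmetry} of this action. The lone ``no'' instance (the constant function~$0$) is a fixed point of relabeling, whereas a fixed ``yes'' instance with its solution at position~$i$ is carried to a ``yes'' instance whose solution sits at the uniformly random position~$\sigma^{-1}(i)$; hence~$f_\sigma$ is distributed exactly according to~$\mu^1_n$. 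Running~$\cB$ on~$f_\sigma$ therefore yields an algorithm~$\cB'$, with the same query complexity, whose error is at most~$\eps_0$ on the ``no'' instance and, for \emph{every} ``yes'' instance, equals the $\mu^1_n$-distributional error of~$\cB$, hence is at most~$\eps_1$.

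Next I would balance these two per-class bounds. Assume without loss of generality that~$\eps_1 \geq \eps_0$ (otherwise swap the roles of ``yes'' and ``no'', which the~$\max$ and~$\size{\cdot}$ in the target formula accommodate symmetrically). Modify~$\cB'$ so that, with some probability~$\delta$, it ignores its input and outputs ``yes'', and otherwise runs~$\cB'$ unchanged; this makes no extra queries. Outputting ``yes'' is always correct on ``yes'' instances and always wrong on the ``no'' instance, so the modified worst-case errors are at most~$(1-\delta)\eps_1$ on ``yes'' instances and at most~$(1-\delta)\eps_0 + \delta$ on the ``no'' instance. Equating the two targets~$(1-\delta)\eps_1 = (1-\delta)\eps_0 + \delta$ and solving gives~$\delta = (\eps_1-\eps_0)/(1+\eps_1-\eps_0)$, at which point both bounds collapse to the common value~$\eps_1/(1+\eps_1-\eps_0)$, which is exactly~$\max\set{\eps_0,\eps_1}/(1+\size{\eps_0-\eps_1})$. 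Note that the reduction only needs the bounds~$\eps_0,\eps_1$ to fix~$\delta$: if the true per-class errors are smaller, both resulting errors remain~$\le \eps$, so the worst-case error is still at most~$\eps$.

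The final step is purely arithmetic. With~$\eps_1 \geq \eps_0$, the inequality~$\eps_1/(1+\eps_1-\eps_0) < 1/2$ is equivalent to~$2\eps_1 < 1+\eps_1-\eps_0$, i.e.\ to~$\eps_0+\eps_1 < 1$, which is precisely the hypothesis. I do not anticipate a genuine obstacle here; the one place that demands care is the setup of the self-reduction---specifically, recognizing that the ``no'' instance is a fixed point while ``yes'' instances are fully randomized. This asymmetry is exactly what forces the two-sided errors~$\eps_0,\eps_1$ to enter separately and what motivates the biasing step, whereas the balancing computation itself is routine.
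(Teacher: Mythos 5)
Your proposal is correct and follows essentially the same route as the paper: a random relabeling of the domain to convert the distributional guarantees into per-instance guarantees (noting that the ``no'' instance is fixed while ``yes'' instances are uniformized), followed by outputting ``yes'' with probability~$\delta = (\eps_1-\eps_0)/(1+\eps_1-\eps_0)$ to equalize the two error bounds. The balancing computation and the final arithmetic check that~$\eps < 1/2$ under~$\eps_0+\eps_1<1$ match the paper's argument exactly.
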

\begin{proof}
Composing a function~$f : [n] \rightarrow \set{0,1}$ with a
permutation~$\sigma$ on~$[n]$ preserves the ``yes'' and the ``no''
instances of~$\usearch_n$.  Consider the algorithm~$\cB_{\mathrm{sym}}$:
\begin{quote}
Pick a uniformly random permutation~$\sigma$ on~$[n]$, and then return
the value given by the algorithm~$\cB$ with every oracle 
query~$i \in [n]$ replaced by a query to~$\sigma(i)$.
\end{quote}
Effectively, any single instance of~$\usearch_n$ is mapped to a
uniformly random instance with the same answer.  So the worst-case error
on the ``no'' instance is at most~$\eps_0$, and on any ``yes'' instance
is at most~$\eps_1$.

It only remains to equalize the bounds on error on the two
kinds of instance, and this may be accomplished by a standard
modification: depending on whether~$\eps_0 < \eps_1$ or not, we accept or
reject with some probability~$p$, and run the
algorithm~$\cB_{\mathrm{sym}}$ with probability~$1-p$. The choice of~$p
= \size{\eps_1-\eps_0}/(1+\size{\eps_1-\eps_0})$ gives us the claimed 
error bound.
\comment{
For simplicity, assume that the error on ``yes'' instances is larger: 
$\eps_0 < \eps_1$.  The
worst-case algorithm for~$\usearch_n$ is as follows. 
\begin{quote}
With probability~$p = (\eps_1-\eps_0)/(1+\eps_1-\eps_0)$, output
``yes'', and stop.

With probability~$1-p$, run the symmetrized
algorithm~$\cB_{\mathrm{sym}}$.
\end{quote}
A straightforward
calculation reveals that the worst-case error probability is as stated
in the lemma.}
\end{proof}
There is a similar randomized self-reduction for $\perm_n$.
Let~$\omega,\sigma$ be uniformly random permutations on~$[n]$ such 
that~$\omega$ maps~1 to itself, and~$\sigma$ permutes odd integers 
among themselves and even integers among themselves. Then for any
permutation~$\pi$ on~$[n]$, the composition~$\omega \circ \pi \circ \sigma$
is uniformly distributed in~$P_1$ if~$\pi$ is a ``yes'' instance, and
in~$P_0$ if it is a ``no'' instance. Therefore, an algorithm~$\cA$ for 
$\perm_n$ with distributional error at most~$\eps_1,\eps_0$ on 
uniformly random ``yes'' and ``no'' instances, respectively, 
leads to a worst-case algorithm which makes error at most~$\eps_1$ on
any ``yes'' instance and at most~$\eps_0$ on any ``no'' instance.
When~$n$ is even, the distributional error of~$\cA$ on the uniform
distribution is at most~$\eps = (\eps_0+\eps_1)/2$ and when~$n$ is odd,
it is at most~$\eps = ((1+1/n)\eps_0 + (1-1/n)\eps_1)/2$. We may now
rebalance the bound on error in the two kinds of instance to obtain a 
worst-case algorithm. The self-reduction also allows us to use an 
algorithm~$\cA$ for $\perm_n$ to solve $\perm_{n-1}$ as claimed in
Section~\ref{sec-introduction}. We first symmetrize the algorithm~$\cA$
through the self-reduction, and then run it on the extension of the input
permutation~$\pi$ on~$[n-1]$ obtained by defining~$\pi(n) = n$.

\subsection*{Acknowledgements}

I thank the anonymous \emph{Theory of Computing\/} referees for their 
insightful comments and suggestions. They vastly improved the quality of 
the presentation.

\bibliography{perm}

\begin{thebibliography}{1}

\bibitem{Ambainis02}
Andris Ambainis.
\newblock Quantum lower bounds by quantum arguments.
\newblock {\em Journal of Computer and System Sciences}, 64(4):749--898, 2002.

\bibitem{BennettBBV97}
Charles~H. Bennett, Ethan Bernstein, Gilles Brassard, and Umesh Vazirani.
\newblock Strengths and weaknesses of quantum computing.
\newblock {\em SIAM Journal on Computing}, 26(5):1510--1523, 1997.

\bibitem{Grover96}
Lov~K. Grover.
\newblock A fast quantum mechanical algorithm for database search.
\newblock In {\em Proceedings of the Twenty-Eighth Annual {ACM} Symposium on
  the Theory of Computing}, pages 212--219. ACM Press, New York, NY, USA,
  22--24 May 1996.

\bibitem{KayeLM07}
Phillip Kaye, Raymond Laflamme, and Michele Mosca.
\newblock {\em An Introduction to Quantum Computing}.
\newblock Oxford University Press, New York, NY, USA, 2007.

\end{thebibliography}

\end{document}